\newtheorem{theorem}{Theorem}
\newtheorem{proposition}{Proposition}
\newtheorem{definition}{Definition}
\newtheorem{remark}{Remark}
\newtheorem{lemma}{Lemma}
\newcommand{\free}{K^{\mathrm{free}}}
\newcommand{\risky}{K^{\mathrm{risky}}}
\newcommand{\deriv}{K^{\mathrm{deriv}}}
\title{Hedging in Sequential Experiments}
\author{
  Thomas Cook \\
  Dept. of Mathematics and Statistics \\
  University of Massachusetts \\
  Amherst\\
  \texttt{tjcook@umass.edu} \\
   \And
  Patrick Flaherty \\
  Dept. of Mathematics and Statistics \\
  University of Massachusetts \\
  Amherst\\
  \texttt{pflaherty@umass.edu} \\
}
\begin{document}
\maketitle

\begin{abstract}
Experimentation involves risk.
The investigator expends time and money in the pursuit of data that supports a hypothesis.
In the end, the investigator may find that all of these costs were for naught and the data fail to reject the null.
Furthermore, the investigator may not be able to test other hypotheses with the same data set in order to avoid false positives due to p-hacking.
Therefore, there is a need for a mechanism for investigators to hedge the risk of financial and statistical bankruptcy in the business of experimentation.

In this work, we build on the game-theoretic statistics framework to enable an investigator to hedge their bets against the null hypothesis and thus avoid ruin.
First, we describe a method by which the investigator's test martingale wealth process can be capitalized by solving for the risk-neutral price.
Then, we show that a portfolio that comprises the risky test martingale and a risk-free process is still a test martingale which enables the investigator to select a particular risk-return position using Markowitz portfolio theory.
Finally, we show that a function that is derivative of the test martingale process can be constructed and used as a hedging instrument by the investigator or as a speculative instrument by a risk-seeking investor who wants to participate in the potential returns of the uncertain experiment wealth process.
Together, these instruments enable an investigator to hedge the risk of ruin and they enable a investigator to efficiently hedge experimental risk.
\end{abstract}

\keywords{sequential decisions \and game-theoretic statistics \and experiment design}

\section{Introduction}\label{sec:introduction}

If experimentation is viewed as a game against nature, the investigator invests some of their wealth (time and money) to collect data that will be used to support or reject a hypothesis. 
During the process of experimentation, the investigator must make decisions as to how much wealth to invest in an experiment and when to stop collecting data. 
In many ways, experimentation is an economic activity with scientific benefits.

Statistical decisions have been explored since the 1950s by \cite{wald1950decisions}, \cite{anscombe1953sequential}, \cite{robbins1951asymptotically,robbins1952sequential}, and \cite{doob1953stochastic} who formalized the connection to stochastic processes. 
The work by \cite{shafer2005probability,shafer2019game} built on early work of \cite{Ville1939} to formalize the notion of statistical decisions as a game with a wealth process. 
The work surveyed by \cite{ramdas2023gametheoretic} built on the work of \cite{shafer2005probability, shafer2019game} to construct an e-process that provides anytime-valid inference - that is a process that enables the investigator to observe the stochastic wealth process at any time to make a statistical decision that has a valid probability of false rejection.
\cite{tukey1994collected} and \cite{foster2008alpha} extended the notion of a wealth process to multiple hypothesis testing situations and introduce the notion of ``investing'' such that an experiment can provide a return to the wealth process.

In the aforementioned work on game-theoretic statistics, a test martingale wealth process lies at the center of the proposed methodology. 
The investigator wishes to grow this stochastic wealth process to make scientific discoveries.
If the wealth process exceeds a predefined level, it is quantifiable evidence in favor of a set of alternative hypotheses and against a set of null hypotheses.
However, the investigator has a risk of ruin (or near-ruin) in this test martingale wealth process simply due to its stochastic nature.
In some scenarios, the investigator is then left with so little wealth that further productive experimentation is not possible. 
We term this statistical analog of financial bankruptcy, statistical bankruptcy.
The investigator, therefore, is in need of a mechanism for risk management for the test martingale wealth process.

We extend game theoretic statistics by generalizing notions in mathematical finance. 
The statistical wealth process is viewed as an asset that is capitalized. 
This asset can then be traded at a fair price so that the investigator can manage their risk while maintaining time-uniform error control. 
We then construct a derivative contract on the asset which enables parties to hedge and speculate on the outcome of the experiment. 
While these objects are theoretical in nature, they provide a framework for the practical valuation and exchange of experimental risk. 

In Section~\ref{sec:tech-prelim} we provide a brief summary of the testing by betting framework \citep{shafer2021betting} and provide an example of an ``optimal'' betting system. In Section~\ref{sec:capitalizing} we formally present the wealth process from testing by betting in the setting of mathematical. After providing a motivating example, we state necessary definitions and propose pricing the asset at the risk-neutral price. We then define an investigator's portfolio and show that the total value of the portfolio can be used for an anytime-valid sequential test. In Section~\ref{sec:derivatives} we present derivative contracts of the asset. We focus on European style options contracts and show how to price such contracts. In Section~\ref{sec:empirical-results} simulations show that European put options can be perfectly hedge the risk of ruin with only slight degradation in power.

\section{Technical Preliminaries: Testing by Betting}\label{sec:tech-prelim}

We first introduce the testing by betting framework in Section~\ref{sec:betting-system} with an example. We then describe the risk of ruin in Section~\ref{sec:ruin}

\subsection{Defining the Betting System}\label{sec:betting-system}

An example is helpful in clarifying the concepts underlying testing by betting.
Suppose we would like to test the hypothesis that a coin is fair versus the hypothesis that the coin is biased with a probability of heads of 3/4. 
The null hypothesis is $H_0$: $p = 0.5$ and the alternative is $H_1$: $p = 0.75$, where $p$ is the probability of heads. 
We observe $\omega_1, \omega_2, \ldots \stackrel{iid}{\sim} P$, where $\omega_i \in \{0,1\}$, $1$ denotes ``heads'', $0$ denotes ``tails'', and $P$ is the true measure.

Testing by betting assumes that a gambler places a wager on each coin toss.
Following the notation of \cite{ethier2010doctrine}, a \textit{wager} is a pair $(B,X)$ of jointly distributed random variables where $B$ is the amount bet (placed at risk) and $X$ is the payoff. 
A \textit{betting system} consists of a sequence of bet sizes $B_1, B_2, \ldots$ with $B_t$ the amount bet at coup $t$ with $B_1 = b_1 \geq 0$ and $B_t = b_t(X_1, \ldots, X_{t-1}) \geq 0$ for $t \geq 2$, where $b_1$ is a constant.
The gambler's fortune, $M_T$, after $T$ coups satisfies $M_T = M_{T-1} + B_TX_T$. Therefore, $M_T = M_0 + \sum_{t=1}^T B_t X_t$.
The gambler cannot bet more than their wealth $B_t \leq M_{t-1}$.
The gambler adopts the Kelly betting system and bets a fixed proportion $f \in [0,1]$ of the current wealth at each coup, $B_t = f M_{t-1}$ \citep{kelly1956betting}. 
Then $M_T = M_{T-1}(1+fX_T(\omega_T))$ and $M_T = M_0\prod_{t=1}^T (1+fX_t(\omega_t))$.

In the testing by betting framework, the payoff $X_t(\omega_t)$ is related to the null hypothesis such that the gambler's wealth does not increase in expectation under the null ($p=0.5$) and is expected to increase if $p=0.75$.
One way to construct such a process is by taking the payoff, $X_t(\omega_t)$, to be proportional to the ratio of the alternative hypothesis measure, $q(\omega_t)$, and the null hypothesis measure, $p(\omega_t)$. The multiplicative update is then

$$
(1+fX_t(\omega_t)) = \frac{q(\omega_t)}{p(\omega_t)}.
$$
Since the bet (amount placed at risk) is $B_t = 0.5M_t$ - the gambler stands to lose half of their wealth if  $\omega_t = 0$, since then the wealth is multiplied by $q(0)/p(0) = 1/2$.  
On the other hand, if $\omega_t = 1$, then the wealth is multiplied by $q(1)/p(1) = 3/2$.
In general, the payoff structure for the wager $B_t = 0.5M_t$ is $X_t(0) = -1$, and $X_t(1) = +1$.
This wager provides the following multiplicative updates to the wealth process for each coup
$$
(1+fX_t(\omega_t)) =
\begin{cases}
3/2 & \textrm{if}\  \omega_t = \textrm{1}, \\
1/2 & \textrm{if}\ \omega_t = \textrm{0}.
\end{cases}
$$
\cite{waudbysmith2022estimating} provide a universal representation for test martingales. 
We define the equation 
$$
f_t X_t(\omega_t) = \lambda_t (Y_t(\omega_t) - 0.5),
$$
where the fraction bet, $f_t$, is now dependent on time, and $Y_t(\omega_t) \in \{0,1\}$.
There is a one-to-one mapping from $f_t \in \{-1,1\}$ to $\lambda_t \in \{-2, 2\}$, where $f_t < 0$ is a bet in the opposite direction. 
In this form, the wealth process may be written as
$$
M_T = M_0\prod_{t=1}^T (1+ \lambda_t(Y_t - 0.5)),
$$
where we have suppressed the dependence of $Y_t$ on the outcome $\omega_t$.
This corresponds to setting $\lambda_t=1, $for all times $ t$. 
The gambler's wealth process, $M_t$, is a test martingale under the null hypothesis measure and by Ville's maximal inequality for nonnegative supermartingales,
$$
\mathbb{P}_{P}\left(\exists t \in \mathbb{N}~\mathrm{s.t.}~M_t \geq \frac{1}{\alpha}\right) \leq \alpha,
$$
for some $\alpha \in (0,1)$ \citep{Ville1939}.
Therefore, the gambler's wealth process will remain bounded by $1/\alpha$ with probability $\alpha$ if the null hypothesis is true (the coin is fair) and the Kelly betting system guarantees that the gambler's wealth will increase at a maximal rate (thus exceed the $1/\alpha$ level) if the alternative is true.

\subsection{Risk of Ruin}\label{sec:ruin}
The Kelly betting system provably achieves the maximal expected rate of return for the gambler's wealth if the alternative is true.
The gambler, however, may be unlucky. 
Indeed, \citet{kelly1956betting} remarked that if the betting game is stopped at some finite time $T$, then the optimal strategy depends on gambler's preference ``of being broke or possessing a fortune.''
Suppose that the alternative is true, $p=0.75$, but the outcome of the first three coin flips is $\omega^3 = (0,0,0)$.
Then the gambler's wealth is $M_3 = 1/8$ and it will take a larger number of samples to recover and finally exceed the $1/\alpha$ bound than if we started at $M_3=1$.
Though there is no risk of total ruin in proportional betting, some form of risk management is needed to prevent the gambler's wealth from declining to very low levels. Note that continuing the test from a wealth of $1/8$ corresponds to testing at much more stringent level of $0.00625$ on the remaining observations. An investigator with a limited budget may consider this scenario as ruin.

One way to mitigate the risk of ruin is to share the risk with other participants - individuals who seek risk.
Of course, sharing in the risk also comes with sharing in the rewards should the outcome be favorable. 
To accomplish this goal, we capitalize the wealth process and allow investigators to trade shares in that wealth process, a notion we formalize in the following section.

\section{Capitalizing the Testing by Betting Wealth Process}\label{sec:capitalizing}

We now formalize the notion of capitalizing and pricing the gambler's wealth process. 
Section~\ref{sec:simple_null_asset} motivates the fundamental of problem by finding a single price for a share in a Bernoulli experiment. 
Section~\ref{sec:characterizing} rigorously defines the gambler's wealth process in the testing by betting framework as an asset.
With these definitions, Section~\ref{sec:portfolio} defines an investigator's portfolio. We show that the investigator can tune the risk and return of their portfolio to their preference, and that the portfolio wealth process can be used in an anytime-valid sequential test.

\subsection{Simple Null and Alternative Bernoulli Experiment}\label{sec:simple_null_asset}

Continuing with the Bernoulli experiment from Section~\ref{sec:tech-prelim}, we can take the stochastic wealth process $M_t$ under the optimal betting system for the alternative to be an asset - that is, an uncertain sequence of payoffs starting from a fixed time $t$. 
The gambler generates $N$ shares of the wealth process such that each share is entitled to a right to a $1/N$ share of the equity of the wealth at any time $t$.

Prior to being capitalized, the asset is worth $M_0 = 0$ as the gambler has no wealth with which to bet on the outcome of the coin flips.
In the parlance of finance, the firm is nothing more than a betting strategy that may pay-off after future coin flips have happened.
The investigator will sell these shares to investors in return for capital that they will use to bet on the Bernoulli trials.
If these shares are to be sold to investors, what is the fair price for the shares?
To simplify this analysis, we will assume that the experiment will be terminated at $t=3$ and the shares will be liquidated and the capital will be returned to the investors.
This analysis can be extended to any finite time horizon.

Under the assumptions of no-arbitrage, the price of the asset is the present value of all future flows under the risk-neutral measure.
Assuming that there is no time-cost of capital, the discount rate is $d=0$.
Therefore, the price of the asset is $M_0 = \mathbb{E}_{\mathrm{risk-neutral}}[M_3]$.
In the testing by betting framework, the wealth process under the null is such that $M_0 = \mathbb{E}_P[M_t]$ for all $t$.
In the Bernoulli experiment, the measures are parameterized by a single parameter and the single equation constrains the risk-neutral measure to be equal to the measure under the null hypothesis.
This case illustrates the basic principle that we generalize in later sections.
Under the null, $p=0.5$, $\mathbb{E}_P[M_3] = M_0 = 1$.
This capitalizes the asset with the price of $M_0=1$.

Under the alternative $Q$, we have that $q=0.75$ and that
$$
\mathbb{E}_Q[M_3] = M_0 \left[q^3 \frac{27}{8} + 3 q^2 (1-q) \frac{9}{8} + 3 q (1-q)^2 \frac{3}{8} + (1-q)^3 \frac{1}{8}\right] \approx 1.953 M_0.
$$
If the coin flips are lucky, $(1,1,1)$, then the investors will be holding shares in an asset worth $\frac{27}{8}$ which is a 237.5\% increase in the value of their investment.
If the investors believe in the alternative, they expect to hold an asset worth $1.953$ which is an increase of 95.3\%.
Thus, the investors are induced to commit their wealth to the asset if they believe the alternative to be true, but they are unwilling to pay more than the value of the asset under the null because they are risk-neutral.

Both the investigator and the investors have risked a smaller portion of their fortune than if they were required to commit to fully capitalize $M_0$ themselves.
They, of course, do not enjoy the full extent of the upside, but they have mitigated their risk of ruin. Without a mechanism to share risk, the experiment may not have occurred if one investigator was required to fully fund the endeavor themselves. 

\subsection{Characterizing and Pricing the Wealth Process}\label{sec:characterizing}

We now formalize sequential testing notions in the framework of mathematical finance. 
We begin by defining the test wealth process, cash flow, and asset in Section~\ref{sec:definitions}. In Section~\ref{sec:asymp-cashflow} we consider the long run returns of the asset under the null hypothesis by analyzing the asymptotic behavior of the cash flow. We then turn to the problem of pricing the asset in Section~\ref{sec:pricing-asset} risk-neutral pricing. 

\subsubsection{Technical Definitions}\label{sec:definitions}
We first define a \emph{test wealth process}, which is closely related to the capital process of \citet{waudbysmith2022estimating}.

\begin{definition}[Test Wealth Process]\label{def:wealthprocess}
    Let a test wealth process be a stochastic process $K := (K_0, \ldots)$. The truncated process $K_{0,T} := (K_0, \ldots, K_T)$ is the stochastic process truncated at time $T$.
    The value of the wealth process at time $t$ is $K_t$. At time $t=0$, $K_0 = 1$.
\end{definition}

\begin{remark}
    If the test is of a simple null and simple or composite alternative, and we have that $\mathbb{E}_{P}(K_{t} \mid K_1,\dots,K_{t-1} ) = K_{t-1}$, then the test wealth process is equivalent to an e-process. 
\end{remark}

The science of investing can be broadly defined as the application of mathematical tools to tailoring a pattern of cash flows \citep{luenberger2014investment}. 
In mathematical finance, these cash flows represent a sequence of expenditures and receipts denominated in cash spanning over some time. 
A cash flow forms a defined, potentially random, sequence of flows into and out of a wealth process over time. 
We generalize this notion to the testing-by-betting wealth process and the flows into and out of that process due to the returns on individual unit bets.

\begin{definition}[Cash flow]\label{def:cashflow}
    Let a cash flow be a sequence $C = (c_1, \ldots$) such that $c_{t} = K_t - K_{t-1}$ for all $t>0$, with a final payoff denoting the price of the test wealth process at some stopping time. Like the wealth process, the cash flow process can be truncated as $C_{1,t} := (c_1, \ldots, c_{t})$. 
\end{definition}

In order to value and trade a cash flow, it is necessary to determine a fair price for a given cash flow by estimating the \emph{net present value}. 
We first broadly define an asset as a function of a cash flow, and then use a function of particular interest, the net present value function. 

\begin{definition}[Asset]\label{def:asset}
    An asset is a function of the cash flow sequence at time $t$, $g_t(C_{t,T})$. The net present value, $g_t^{\textrm{PV}}$, discounts future values of the cash flow such that $g_t^{\textrm{PV}}(C_{t,T}) = \mathbb{E}(\sum_{t'=t}^T d_{t,t'}c_{t'})$, where $d_{t,t'}$ is the discount factor from $t'$ to $t$, and the expectation is taken with respect to the true (possibly unknown) probability measure.
\end{definition}

The notion of discounting future values arises in mathematical finance from the time value of money. Since investors can earn interest risk free, the value of a dollar at future times is discounted by this risk free rate. If we assume that no arbitrage opportunities exist, then by the comparison principle, the value of an asset must discount future flows by this same rate \citep{luenberger2014investment}. Calculating the net present value is a widely used tool in mathematical finance, but it requires calculating an expectation with respect to the true underlying probability measure. If two separate investors have different beliefs of the underlying probability measure, then it is likely that their calculated net present values will differ as well. In order to trade the asset, a single price must be agreed upon.

\subsubsection{Asymptotic Behavior of Cash Flow under the Null Hypothesis}\label{sec:asymp-cashflow}

If $K_t$ is a test martingale, then it follows that the underlying cash flow excluding final sale is a \emph{martingale difference sequence}. There is a rich literature on characterizing the asymptotic behavior of martingale difference sequences. We now briefly present two results that suggest an investigator cannot expected to make a positive return on the test wealth process when the null hypothesis is true. 

Let us assume that these increments have a uniformly bounded second moment and that the null hypothesis is true. With these assumptions in mind, we now present the first result.

\begin{theorem}[Null Cash Flow Convergence]\label{thm:null-slln}
   Let $K_t$ be a test wealth process and let C be the corresponding cash flow increments, excluding final sale, with uniformly bounded second moments. If the null hypothesis is true, then as $T \to \infty$,  
   $$\frac{1}{T} \sum_{t=1}^{T} c_t \to 0 \hspace{1mm} \textrm{ almost surely. }$$
   
\end{theorem}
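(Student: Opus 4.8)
The plan is to recognize this statement as a strong law of large numbers for a martingale difference sequence and to prove it by the classical combination of an $L^2$ martingale convergence argument with Kronecker's lemma. First I would fix the natural filtration $\mathcal{F}_t := \sigma(K_0, \ldots, K_t)$ and observe that, because $K_t$ is a test wealth process satisfying the e-process property $\mathbb{E}_P(K_t \mid \mathcal{F}_{t-1}) = K_{t-1}$ under the null (as noted in the remark following Definition~\ref{def:wealthprocess}), the increments $c_t = K_t - K_{t-1}$ satisfy $\mathbb{E}_P(c_t \mid \mathcal{F}_{t-1}) = 0$. Thus $(c_t)$ is a martingale difference sequence; in particular the $c_t$ are mean-zero and pairwise orthogonal in $L^2$. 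I would record the standing hypothesis as $\sup_t \mathbb{E}_P(c_t^2) \leq B < \infty$.

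Next I would introduce the weighted partial-sum process $S_T := \sum_{t=1}^T c_t / t$. Since each weight $1/t$ is deterministic and $(c_t)$ is a martingale difference sequence, $(S_T)$ is itself a martingale with respect to $(\mathcal{F}_T)$. Using the orthogonality of martingale differences, its second moment telescopes to $\mathbb{E}_P(S_T^2) = \sum_{t=1}^T \mathbb{E}_P(c_t^2)/t^2 \leq B \sum_{t=1}^\infty 1/t^2 = B\pi^2/6 < \infty$, so $(S_T)$ is bounded in $L^2$ uniformly in $T$.

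I would then invoke the martingale convergence theorem: an $L^2$-bounded martingale converges almost surely (and in $L^2$) to a finite limit. Hence $\sum_{t=1}^\infty c_t/t$ converges almost surely. The final step is Kronecker's lemma, applied with $b_t = t \uparrow \infty$: whenever $\sum_t a_t/b_t$ converges, one has $b_T^{-1}\sum_{t=1}^T a_t \to 0$. Taking $a_t = c_t$ yields $\frac{1}{T}\sum_{t=1}^T c_t \to 0$ almost surely, which is exactly the claim.

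The argument is essentially routine once the martingale-difference structure is identified, and I expect the only points requiring care to be bookkeeping rather than genuine obstacles. Specifically, the $L^2$-boundedness computation must use the unconditional bound $\mathbb{E}_P(c_t^2) \leq B$ together with cross-term orthogonality; it is worth emphasizing that \emph{uniformly bounded} (rather than summable) second moments suffice precisely because the $1/t^2$ weights supply the summability, and one should confirm that the e-process property under $P$ is what licenses $\mathbb{E}_P(c_t \mid \mathcal{F}_{t-1}) = 0$. If one preferred to bypass the explicit Kronecker step, the same conclusion follows by citing a packaged martingale SLLN, but the $L^2$-convergence-plus-Kronecker route keeps the proof self-contained.
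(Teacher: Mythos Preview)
Your proposal is correct and follows essentially the same approach as the paper: both recognize $(c_t)$ as a martingale difference sequence under $P$, use the uniform second-moment bound to show $\sum_t \mathbb{E}_P(c_t^2)/t^2 < \infty$, and then appeal to the martingale strong law. The only difference is presentational---the paper cites the martingale SLLN as a packaged result, whereas you unpack its proof via the $L^2$-bounded martingale convergence theorem and Kronecker's lemma.
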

\begin{proof}
    Proof is provided in Appendix~\ref{appdx:slln-mds-proof}
\end{proof}
The result of Theorem~\ref{thm:null-slln} states that in the long run, a cash flow based on a test wealth process under the null hypothesis will on average bring a return of $0$ until final sale. We now turn to characterize the asymptotic distribution of such a process.

\begin{theorem} [Asymptotic Distribution of Cash Flow]\label{thm:cash-flow-clt}
    Let $K_r$ be a test wealth process and let C be the corresponding cash flow. If $T^{-1}\sum_{t=1}^{T} \mathbb{E}(c_t^2 | c_{1}, \dots, c_{t-1}) \xrightarrow[]{p} \sigma^2$, where $\sigma^2$ is a constant $0 < \sigma^2 < \infty$, and for any $\epsilon > 0$, $T^{-1} \sum_{t=1}^{T}\mathbb{E}[c_t^2 ~\mathbf{1} [| c_t | > \epsilon ] \mid c_1, \dots, c_{t-1}] \xrightarrow[]{p} 0$, then we have that
    
    $$\sqrt{T} \sum_{t=1}^T c_t \xrightarrow[]{d}N(0,\sigma^2).$$
\end{theorem}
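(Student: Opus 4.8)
The plan is to recognize this as the martingale central limit theorem (the Lindeberg--Feller form for a martingale difference sequence) and to verify that the stated hypotheses are precisely its conditions. First I would fix the filtration $\mathcal F_t = \sigma(c_1,\dots,c_t)$ and record the one ingredient the theorem statement leaves implicit but which holds under the null: since $K_t$ is a test martingale, $\mathbb E[c_t\mid\mathcal F_{t-1}] = \mathbb E[K_t - K_{t-1}\mid \mathcal F_{t-1}] = 0$, so $\{c_t,\mathcal F_t\}$ is a martingale difference sequence (the same structure used for Theorem~\ref{thm:null-slln}). This is exactly what centers the limit at zero. I would also note that the normalization consistent with the conditional-variance hypothesis $T^{-1}\sum_{t=1}^T \mathbb E[c_t^2\mid\mathcal F_{t-1}] \xrightarrow{p}\sigma^2$ is $S_T := T^{-1/2}\sum_{t=1}^T c_t$, for which $\mathrm{Var}(S_T)\to\sigma^2$; with this scaling the Lindeberg truncation is naturally read at level $\epsilon\sqrt T$, i.e. with indicator $\mathbf 1[|c_t|>\epsilon\sqrt T]$, and the conclusion is the convergence of $S_T$.

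Then I would prove $\mathbb E[e^{iuS_T}]\to e^{-\sigma^2u^2/2}$ by the characteristic-function product method (as in McLeish 1974, or Hall and Heyde). Write $e^{iuS_T}=\prod_{t=1}^T e^{iu c_t/\sqrt T}$ and introduce the auxiliary product $\mathcal P_T := \prod_{t=1}^T (1 + iu c_t/\sqrt T)$. The martingale property is what makes this clean: each factor satisfies $\mathbb E[1 + iuc_t/\sqrt T\mid\mathcal F_{t-1}] = 1$, so iterated conditioning gives $\mathbb E[\mathcal P_T] = 1$ \emph{exactly}, for every $T$. A second-order expansion of each factor yields the key identity $e^{iuS_T} = \mathcal P_T\,\exp\!\big(-\tfrac{u^2}{2T}\sum_{t=1}^T c_t^2 + R_T\big)$, where $R_T$ collects the cubic-and-higher remainders. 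If one can show $R_T\xrightarrow{p}0$ and $T^{-1}\sum_{t=1}^T c_t^2\xrightarrow{p}\sigma^2$, the second factor converges in probability to the constant $e^{-\sigma^2u^2/2}$, so $e^{iuS_T}\approx e^{-\sigma^2u^2/2}\,\mathcal P_T$; taking expectations and using $\mathbb E[\mathcal P_T]=1$ together with uniform integrability of $\mathcal P_T$ delivers $\mathbb E[e^{iuS_T}]\to e^{-\sigma^2u^2/2}$.

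The two hypotheses enter precisely here. The conditional Lindeberg condition guarantees uniform asymptotic negligibility, $\max_{t\le T}|c_t|/\sqrt T \xrightarrow{p} 0$, which both legitimizes the second-order expansion (forcing $R_T\xrightarrow{p}0$) and furnishes the uniform integrability of the complex product $\mathcal P_T$. The conditional-variance condition, combined with a weak law of large numbers for the martingale difference sequence $\{c_t^2 - \mathbb E[c_t^2\mid\mathcal F_{t-1}]\}$ (itself a consequence of the Lindeberg condition), upgrades the hypothesis on the conditional second moments to the realized quadratic variation $T^{-1}\sum_{t=1}^T c_t^2\xrightarrow{p}\sigma^2$ that appears in the exponent.

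The main obstacle is exactly this pair of negligibility-and-integrability estimates: transferring from the conditional second moments (which the hypotheses supply directly) to the realized sum of squares $\sum_t c_t^2$, and simultaneously establishing the $L^1$ control of $\mathcal P_T$ needed to pass the probabilistic limit $e^{iuS_T}\approx e^{-\sigma^2u^2/2}\mathcal P_T$ through the expectation. This is the delicate core of every martingale-CLT argument, and it is where the conditional Lindeberg condition does its essential work. Once those estimates are in place, the exact identity $\mathbb E[\mathcal P_T]=1$ and L\'evy's continuity theorem yield the stated convergence in distribution, with the normalization understood as $T^{-1/2}\sum_{t=1}^T c_t$ so that the conditional-variance hypothesis produces a nondegenerate $N(0,\sigma^2)$ limit.
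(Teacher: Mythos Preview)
Your proposal is correct, and it goes well beyond what the paper actually does: the paper's entire proof is a one-line citation of Dvoretzky (1972), whereas you reconstruct the martingale CLT itself via the McLeish/Hall--Heyde characteristic-function method (the product $\mathcal P_T=\prod_t(1+iuc_t/\sqrt T)$ with $\mathbb E[\mathcal P_T]=1$, second-order expansion, Lindeberg for negligibility and uniform integrability, conditional variance for the realized quadratic variation). What you gain is a self-contained argument, and along the way you implicitly repair two slips in the stated theorem: the normalization consistent with the hypotheses is $T^{-1/2}\sum_{t}c_t$ rather than $\sqrt T\sum_t c_t$, and the Lindeberg truncation level should scale as $\epsilon\sqrt T$. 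What the paper's approach buys is brevity---since under the null $K_t$ is a test martingale and hence $\{c_t\}$ is a martingale difference sequence (the one point you correctly flag as implicit), the stated conditions are exactly Dvoretzky's hypotheses and the result is a direct invocation rather than something to be re-proved.
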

\begin{proof}
This result follows directly from \cite{dvoretzky1972}.
\end{proof}
If the cash flow increments are uniformly bounded or have uniformly bounded second moments that do not vanish too quickly as $T \to \infty$, then the required conditions hold. 

We have characterized the long term mean return and variance under the null hypothesis. In essence, a return of mean 0 with some volatility is expected. This is a result that aligns with the anytime-valid inference literature, which states that a gambler betting against a true null hypothesis does not stand to make money in the long run.

\subsubsection{Pricing the Asset}\label{sec:pricing-asset}

We now consider the fundamental issue of pricing, as the previous net present value calculation in Section~\ref{sec:definitions} depends on the unknown true probability measure. In a no-arbitrage setting, assets are priced according to the discounted expected future cash flows, under the \emph{risk-neutral} measure. 
This measure is often referred to as the equivalent martingale measure since under this measure, the dynamics of the price of the underlying asset exhibit a martingale structure discounted by the risk free interest rate. The following proposition applies risk-neutral pricing to the testing by betting wealth process. 

\begin{proposition}[Risk-Neutral Pricing]\label{prop:risk-neutral-conservative}
    If the risk-neutral measure is indistinguishable from the measure under the null hypothesis, then any asset derived from a test wealth process has a risk-neutral price that is equivalent to the expected value under the null hypothesis.

\end{proposition}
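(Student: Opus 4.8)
The plan is to invoke the fundamental theorem of asset pricing and reduce the claim to a single substitution once the risk-neutral measure has been identified with the null measure. Under the no-arbitrage assumption already adopted in Section~\ref{sec:pricing-asset}, the fair price of any asset is its discounted expected payoff computed under the risk-neutral (equivalent martingale) measure $\mathbb{Q}$. For an asset $g_t(C_{t,T})$ derived from the test wealth process, this price is exactly the net present value of Definition~\ref{def:asset} with the expectation taken under $\mathbb{Q}$ rather than under the unknown true measure, namely $\mathbb{E}_{\mathbb{Q}}[\sum_{t'=t}^T d_{t,t'} c_{t'}]$.

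First I would verify that the null measure $P$ is itself a legitimate equivalent martingale measure for the test wealth process, so that setting $\mathbb{Q} = P$ is internally consistent. By the remark following Definition~\ref{def:wealthprocess}, under the null the process satisfies $\mathbb{E}_P(K_t \mid K_1, \ldots, K_{t-1}) = K_{t-1}$; hence the increments $c_t = K_t - K_{t-1}$ form a martingale difference sequence and, with discount rate $d = 0$, the discounted wealth process is a $P$-martingale. This is precisely the defining property required of a risk-neutral measure, so $P$ qualifies as a valid choice of $\mathbb{Q}$.

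Next, under the hypothesis that the risk-neutral measure is indistinguishable from the null---meaning the martingale constraint forces $\mathbb{Q} = P$, exactly as the single-parameter constraint did in the Bernoulli example of Section~\ref{sec:simple_null_asset}---I would substitute $\mathbb{Q} = P$ into the risk-neutral price. This yields $\mathbb{E}_P[\sum_{t'=t}^T d_{t,t'} c_{t'}]$, which is by definition the net present value under the null hypothesis. Because the argument uses only the risk-neutral pricing identity together with this substitution, it holds for an arbitrary asset function $g_t$ and not merely the net present value function, which delivers the stated generality over all assets derived from the test wealth process.

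The main obstacle is not computational but lies in justifying the hypothesis that the risk-neutral measure coincides with the null. In general a no-arbitrage market admits many equivalent martingale measures, and only in special cases---such as a one-parameter family, where a single martingale equation pins down the measure---is the equivalent martingale measure unique and equal to $P$. The substantive work is therefore in articulating when the martingale condition $\mathbb{E}(K_t \mid \mathcal{F}_{t-1}) = K_{t-1}$ identifies a unique measure that can serve simultaneously as risk-neutral and as null; once this identification is granted, the pricing conclusion follows immediately from the substitution above.
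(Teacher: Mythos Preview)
Your argument is correct, and it is actually a cleaner route than the paper's. The paper's own proof does not literally substitute $\mathbb{Q}=P$; instead it computes the net present value of the test wealth process under $P$ explicitly---using $d_{t,t'}=1$, the martingale-difference property $\mathbb{E}_P[c_{t'}]=0$, and $\mathbb{E}_P[K_T]=K_0=1$ to obtain the value $1$---and then repeats the same computation under the risk-neutral measure, noting that the defining condition $\mathbb{E}_{\mathrm{risk\text{-}neutral}}(K_t\mid\mathcal{F}_{t-1})=\tfrac{1}{1+r}K_{t-1}=K_{t-1}$ again forces zero-mean increments and $\mathbb{E}_{\mathrm{risk\text{-}neutral}}(K_T)=1$. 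The conclusion then follows because both sides evaluate to the same number. Your approach bypasses this parallel computation: once the hypothesis identifies $\mathbb{Q}$ with $P$, the equality of expectations is immediate for \emph{any} functional $g_t$ of the cash flow, which is why your last paragraph legitimately extends the claim beyond the net present value function. The paper's version has the minor advantage of exhibiting the concrete price ($=1$) and of making explicit that both measures satisfy the same martingale constraint, whereas your version makes the logical dependence on the indistinguishability hypothesis transparent and delivers the ``any asset'' generality that the proposition promises but the paper's proof only handles for $g_t^{\mathrm{PV}}$.
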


\begin{proof}
   We call two sets \emph{equivalent} if they have the same null-set and we call two measures \emph{indistinguishable} if they assign an equal measure to all sets that do not have measure zero.
Proof is provided in Appendix~\ref{appdx:risk-neutral-pricing-valid}.
\end{proof}

We now show that for the running example Bernoulli experiment, the risk-neutral measure and the null hypothesis measure are indistinguishable.

\begin{proposition}[Underlying Measure]\label{prop:measure}
    Assume that the risk free interest rate is $0$. Suppose that we observe $(Y_t)_{t=1}^T \in \{0,1\}$ such that $\mathbb{E}(Y_t | Y_1, \dots, Y_{t-1}) = p$ and we define 
     $$K_t = \prod_{t=1}^T (1+\lambda_t(Y_t - p)),$$
     then the risk-neutral probabilities are identical to the probabilities under the simple null hypothesis, $H_0: p$, where $p \in [0,1]$.
     
\end{proposition}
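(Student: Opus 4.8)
The plan is to identify the risk-neutral measure directly from its defining martingale property and then read off the one-step probabilities it assigns to each $Y_t$. By definition the risk-neutral (equivalent martingale) measure $\mathbb{Q}$ is the one under which the \emph{discounted} wealth process is a martingale; since the risk-free rate is assumed to be $0$, the discount factor equals $1$ at every horizon, so the requirement collapses to demanding that $K_t$ itself be a $\mathbb{Q}$-martingale. Thus the whole proposition reduces to computing the conditional law of $Y_t$ that makes $K_t$ a martingale and checking that it coincides with $H_0$.

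First I would exploit the multiplicative one-step structure of the process. Let $\mathcal{F}_{t-1}$ denote the information generated by $Y_1,\dots,Y_{t-1}$, so that $K_{t-1}$ and $\lambda_t$ are $\mathcal{F}_{t-1}$-measurable and $K_t = K_{t-1}\bigl(1+\lambda_t(Y_t-p)\bigr)$. Taking the conditional expectation under $\mathbb{Q}$ and imposing $\mathbb{E}_{\mathbb{Q}}[K_t\mid\mathcal{F}_{t-1}]=K_{t-1}$ gives $K_{t-1}\lambda_t\bigl(\mathbb{E}_{\mathbb{Q}}[Y_t\mid\mathcal{F}_{t-1}]-p\bigr)=0$. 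Because $K_{t-1}>0$ and the bet is nontrivial ($\lambda_t\neq 0$), this forces $\mathbb{E}_{\mathbb{Q}}[Y_t\mid\mathcal{F}_{t-1}]=p$.

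Next I would convert this conditional expectation into the claimed probability. Since $Y_t\in\{0,1\}$, we have $\mathbb{E}_{\mathbb{Q}}[Y_t\mid\mathcal{F}_{t-1}]=\mathbb{Q}(Y_t=1\mid\mathcal{F}_{t-1})$, hence $\mathbb{Q}(Y_t=1\mid\mathcal{F}_{t-1})=p$. This is exactly the one-step conditional law specified by the simple null $H_0\!:\!p$, so iterating over $t$ shows that the finite-dimensional distributions of $(Y_t)$ agree under $\mathbb{Q}$ and under the null measure; the two measures are therefore identical. The hypothesis $p\in[0,1]$ ensures $p$ is an admissible probability, so $\mathbb{Q}$ is well defined, and the admissibility constraint $\lambda_t\in[-1/(1-p),\,1/p]$ keeps $K_t$ nonnegative, consistent with a genuine test wealth process.

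The main obstacle is not the algebra but the existence-and-uniqueness bookkeeping for $\mathbb{Q}$. I would argue that at each node the process is a one-period binomial model with a single risky asset and two outcomes, which is a complete market, so the equivalent martingale measure is unique whenever it exists; the computation above simultaneously produces a candidate and shows it is the only $\mathbb{Q}$ compatible with the martingale constraint at every step where $\lambda_t\neq 0$. The one case that deserves care is the degenerate step $\lambda_t=0$: there $K_t$ does not depend on $Y_t$, so the martingale condition imposes no constraint and the risk-neutral law of $Y_t$ is not pinned down by $K$ alone. The clean fix is to restrict attention to nontrivial betting, as in the running example where $\lambda_t=1$, under which the identification of the risk-neutral measure with the null measure holds at every step.
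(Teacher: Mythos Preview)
Your proposal is correct and follows essentially the same route as the paper: both impose the one-step martingale condition $\mathbb{E}_{\mathbb{Q}}[K_t\mid\mathcal{F}_{t-1}]=K_{t-1}$ (since $r=0$) on the multiplicative update $1+\lambda_t(Y_t-p)$ and solve for the conditional law of $Y_t$, obtaining $q_t=p$. Your version is slightly more careful in flagging the degenerate case $\lambda_t=0$ and the completeness/uniqueness bookkeeping, but the core argument is identical to the paper's binomial-lattice computation in Appendix~\ref{appdx:proof-measure}.
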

Proposition~\ref{prop:measure} states that the distribution of $K_t$ under the null hypothesis is the distribution under the risk-neutral measure. This is verified in Appendix~\ref{appdx:proof-measure}.

The full utility of Propositions~\ref{prop:risk-neutral-conservative}~and~\ref{prop:measure} will become apparent in the following section. Briefly, these results imply that if assets are priced according to the risk-neutral price, then an investor who purchases these assets adopts a valid test wealth process, and in turn, Ville's inequality can be applied to conduct a valid sequential test. 

Risk-free interest rates, $r$, are typically nonnegative. 
We assume here that the risk-free interest rate is $0$. 
A positive interest rate would allow for a submartingale wealth process for an investigator with no risk, and in turn, monitoring the wealth of such an investigator would no longer be a valid sequential test. 
A negative interest rate would incentivize the investigator to invest in risky assets which may have a larger mean return and variance. 
We could interpret this negative rate as inflationary pressure to produce scientific results. 
However, if $r$ is negative and assets are priced according to the risk-neutral price, then the risk-neutral price will be less than the expected value under the null hypothesis. 
This implies that under the null hypothesis measure, the expected value of the asset is greater at a future time step, violating the necessary test martingale constraint for an anytime-valid test. 

\subsection{Constructing a Portfolio}\label{sec:portfolio}


We now turn our focus to the investigator and how they may build a portfolio consisting of shares of the asset $K_t$ and a risk free asset. By controlling the allocation of their wealth between a risk free and a risky asset, the investigator can tailor the mean and variance of the portfolio's return to their risk preference. This connection with mean-variance portfolio optimization allows the investigator to reduce the risk of ruin when investing in $K_t$ \citep{markowitz1952portfolio}. 
The impact of the portfolio approach in a statistical hypothesis setting is that an investigator may take longer to reject the null hypothesis if it is truly false, but they gain by preserving wealth for other tests that may be more beneficial. 
We now formally define such a portfolio and show that we can construct a valid test based on its value.

\begin{definition}[One-Fund Testing Portfolio]\label{def:portfolio}
    Let a one-fund testing portfolio be a two-element vector of wealth processes $\mathbf{K}_t := (\free_{t}, \risky_t)$, where $\free_{t}$ denotes the value of the risk free wealth process and $\risky_t$ denotes the value of the shares owned of the risky test wealth process.
\end{definition}

\begin{lemma}\label{lemma:one_fund}
    Assume that the risk free interest rate is $0$ and that the risk-neutral measure is indistinguishable from the null hypothesis measure. Assume that we permit the investigator to short the risky asset up to $\frac{\free_t + u \risky_t}{u-1}$ or to take a one-period loan up to $\frac{\free_t + d \risky_t}{1-d}$, where $u$ and $d$ denote bounds on the multiplicative upside and downside of $\risky_t$, respectively. Then the total value of $\mathbf{K}_t$ is a test martingale under the null hypothesis and $\lvert \mathbf{K}_t \rvert \geq \frac{1}{\alpha}$ is an anytime-valid level $\alpha$ test for some $\alpha \in (0,1)$.
\end{lemma}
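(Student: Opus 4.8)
The plan is to prove that the total portfolio value $V_t := \free_t + \risky_t$ — the quantity written $\lvert \mathbf{K}_t \rvert$ — is a test martingale under the null measure $P$, and then to apply Ville's maximal inequality. By Definition~\ref{def:wealthprocess} a test martingale requires three properties: initialization $V_0 = 1$, nonnegativity $V_t \ge 0$, and the one-step identity $\mathbb{E}_P[V_t \mid \mathcal{F}_{t-1}] = V_{t-1}$. Initialization follows immediately from Definition~\ref{def:wealthprocess} together with the convention that the portfolio begins with one unit of total wealth, so I would dispatch it in a line and concentrate on the martingale identity and on nonnegativity.

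For the martingale identity, note that with a risk-free rate of $0$ the risk-free leg does not grow over a period, while the indistinguishability hypothesis lets Propositions~\ref{prop:risk-neutral-conservative}~and~\ref{prop:measure} identify the law of the risky leg under $P$ with its risk-neutral law, so the per-period gross return $R_t$ of the risky asset satisfies $\mathbb{E}_P[R_t \mid \mathcal{F}_{t-1}] = 1$. Writing $\free_{t-1}$ and $\risky_{t-1}$ for the holdings carried into period $t$ — which are $\mathcal{F}_{t-1}$-measurable and, because shorting and borrowing are self-financing (they only shift cash between the two legs), still sum to $V_{t-1}$ — we have $V_t = \free_{t-1} + R_t \risky_{t-1}$. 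Linearity of conditional expectation then gives $\mathbb{E}_P[V_t \mid \mathcal{F}_{t-1}] = \free_{t-1} + \risky_{t-1} = V_{t-1}$.

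The step I expect to be the crux is nonnegativity, which is exactly what the two trading bounds are engineered to enforce. Since $V_t = \free_{t-1} + R_t \risky_{t-1}$ is affine in $R_t$ and the realized return lies in $[d,u]$, its minimum over a period is attained at $R_t = d$ when the net risky position is long and at $R_t = u$ when it is short, so it suffices to control these two extremes. Shorting an amount $S$ replaces the holdings $(\free_t, \risky_t)$ by $(\free_t + S, \risky_t - S)$ and produces upside value $\free_t + u\risky_t - S(u-1)$, which is nonnegative exactly when $S \le \frac{\free_t + u\risky_t}{u-1}$; symmetrically, a one-period loan $L$ replaces them by $(\free_t - L, \risky_t + L)$ and produces downside value $\free_t + d\risky_t - L(1-d)$, nonnegative exactly when $L \le \frac{\free_t + d\risky_t}{1-d}$. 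These are the stated caps, so enforcing them keeps the worst-case endpoint nonnegative, and by affineness $V_t \ge 0$ for every realized $R_t \in [d,u]$.

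With $V_0 = 1$, $V_t \ge 0$, and the martingale identity in hand, $V_t$ is a test martingale under $P$. Ville's maximal inequality for nonnegative supermartingales then yields $\mathbb{P}_P(\exists\, t : V_t \ge 1/\alpha) \le \alpha$, so rejecting the null the first time $\lvert \mathbf{K}_t \rvert \ge 1/\alpha$ controls the type-I error uniformly over time at level $\alpha$, which is the claimed anytime-valid test.
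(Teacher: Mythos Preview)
Your proof is correct and follows essentially the same route as the paper's: verify $V_0=1$, the one-step martingale identity under $P$, and nonnegativity via the affine dependence on the per-period return, with the loan and short caps chosen precisely so that the worst-case endpoints $R_t=d$ and $R_t=u$ stay nonnegative. Two minor remarks: the paper obtains $\mathbb{E}_P[R_t\mid\mathcal{F}_{t-1}]=1$ directly from the fact that the underlying test wealth process is a $P$-martingale (rather than routing through risk-neutral indistinguishability, which is a detour here), and it treats only the loan case in detail while you spell out both---your explicit invocation of Ville's inequality at the end is also a nice completion that the paper leaves implicit.
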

\begin{proof}
    Proof is provided in Appendix~\ref{appdx:one-fund-proof}.
\end{proof}
Definition~\ref{def:portfolio} and Lemma~\ref{lemma:one_fund} are similar to the Sequentially Rebalanced Portfolio strategy of \citet[Eq.~48]{waudbysmith2022estimating}. Their definition of a Sequentially Rebalanced Portfolio assumes a convex combination of test (super)martingales, whereas Definition~\ref{def:portfolio} does not explicitly assume that $\free_t$ and $\risky_t$ are test (super)martingales. Lemma~\ref{lemma:one_fund} permits $\free_t$ and $\risky_t$ to take negative values. Even with this relaxed assumption, we show that the resulting wealth process, the value of the portfolio, is still a test martingale. A negative value for $\free_t$ occurs when the investor takes a loan to purchase additional shares of $K_t$, which is referred to as being ``long'' on $K_t$, while a negative value for $\risky_t$ occurs when an investigator short sells shares of $K_t$. Short selling is done by borrowing shares and then immediately selling the shares, with the hope that in the future the investor may repurchase the shares at a lower price to return to the lender. Short selling is particularly useful when the investor believes that the price of the asset will decline, for example, when the price of the asset is a supermartingale. With this flexibility, the investor may tailor their portfolio to their risk preference. This is particularly useful when the investor is not in control of the betting strategy, but also can be useful when they do have control of the betting strategy since they do not have to explicitly calculate a new betting strategy. 

\begin{remark} Assume that the mean rate of return of the risky asset is $\mu_{K_t} \geq 0$ with standard deviation $\sigma_{K_t} > 0$. If short selling or loans are not permitted, then the investigator can construct a portfolio $\mathbf{K}_t$ with return parameters $\Tilde{\mu}, \Tilde{\sigma}$ such that $\mu_{K_t} \geq \Tilde{\mu}\geq 0$ and $\sigma_{K_t} \geq \Tilde{\sigma}\geq 0$. Furthermore, if we permit loans and short selling up to the limits of $\frac{\free_t + u \risky_t}{u-1}$ or to take a one-period loan up to $\frac{\free_t + d \risky_t}{1-d}$, where $u$ and $d$ denote bounds on the multiplicative upside and downside of $\risky_t$, respectively, then the investigator can construct a portfolio $\mathbf{K}_t$ with return parameters $\Tilde{\mu}, \Tilde{\sigma}$ such that $\Tilde{\mu} \geq \mu_{K_t} \geq 0$ and $ \Tilde{\sigma}\geq \sigma_{K_t} \geq 0$ or $\mu_{K_t} \geq \Tilde{\mu} \geq 0$ and $  \sigma_{K_t} \geq \Tilde{\sigma}\geq 0$.
\end{remark}

Because the portfolio construction allows for loans, the wealth process of the portfolio can replicate any other test wealth process with a betting strategy that corresponds to a different alternative hypothesis.
Since the test wealth process is a test martingale under the null, it follows from the Universal Representation theorem of test martingales due to \citet[Prop.~3]{waudbysmith2022estimating} that the value of $\mathbf{K}_t$ can be written in the form of a distinct, but correlated, asset. This representation can be recomputed at each time step, and then the resulting strategy may be written in the form of a Sequentially Rebalanced Portfolio \citep[Eq.~48]{waudbysmith2022estimating}. 

\section{Derivatives of the Testing by Betting Wealth Process}\label{sec:derivatives}

In this section we consider pricing derivatives of assets so that the investigator can hedge certain risks. Through a demonstrative example in Section~\ref{sec:bernoulli-derivative}, we first establish that risk-neutral pricing for derivative contracts enables time-uniform error control for sequential tests based on portfolios containing these contracts. In Section~\ref{sec:derivative-portfolio} we consider the construction of arbitrary portfolios with derivatives contracts. We then consider pricing derivative contracts for an experiment with Normally distributed outcomes in Section~\ref{sec:derivative-normal}. This section motivates the need to be able to approximate the price of derivative contracts, and we discuss the use of Monte Carlo simulation for approximating the price of derivatives similar to European option contracts is Section~\ref{sec:monte-carlo}.

\subsection{Pricing Derivatives in Bernoulli Experiments}\label{sec:bernoulli-derivative}

Derivatives are assets that derive their values from the wealth process of some other underlying asset.
Mathematically, the original asset's cash flow derives from the random outcome of a sequence of experiments and a particular betting strategy.
As was shown in Section~\ref{sec:characterizing}, the present value is a function of this cash flow process. 
A derivative is then simply a (typically nonlinear) function of the assets and therefore a function of a function of the cash flow.

To fix these ideas, suppose that, at $t=0$, a call option on the wealth process of the asset that bets optimally according to the belief that $p=0.75$ provides the investigator the right, but not the obligation, to purchase 1 share of the asset at $t=3$.
What is the fair value of this call option?

We leverage risk-neutral pricing over a binomial lattice for this problem. We assume that the risk-free interest rate is $r=0$. The risk-neutral price of an option contract is the discounted expected value of the contract at expiry with respect to the risk-neutral measure. Assuming a strike price of $S$ at expiry time $\tau$, the value of the call option is $K_\tau^{\mathrm{call}} :=\max\{0, K_{\tau} - S \}$. The underlying asset can either increase or decrease by a factor of $\frac{1}{2}$ at each time. The risk-neutral probabilities are then $(0.5, 0.5)$ for an increase and decrease respectively, a result implied by Proposition~\ref{prop:measure}.This expectation can be taken across multiple time periods. 

Such a contract offers a risk-seeking speculator the option to purchase shares of $K_t$ in the future at a discounted rate in the event that favorable results occur. 
At the same time, a risk-averse hedger can use the option contract to protect against ruin. 

We illustrate an example of the pricing of a speculative option contract in 
Figure~\ref{fig:put-price},
where we price a put option contract with expiry $\tau = 3$, strike price $S = 1/4$ and $K_0 = 1$. 
Since $S < K_0$, this option is more valuable for the speculator than for the hedger.
At expiry the value of the option is non-zero only if the outcomes of the Bernoulli experiment is $0$ at all three time periods.
The price of the option is only $1/64$, so for a price of less than $K_0 = 1$ the speculator can take a stake in the random sequence that pays $1/8$ after exercising the option and liquidating the shares.
The hedger is then left with a wealth of $1/4$ instead of $1/8$.



\begin{figure}[!htbp]
    \centering
    \includegraphics[width=0.6\textwidth]{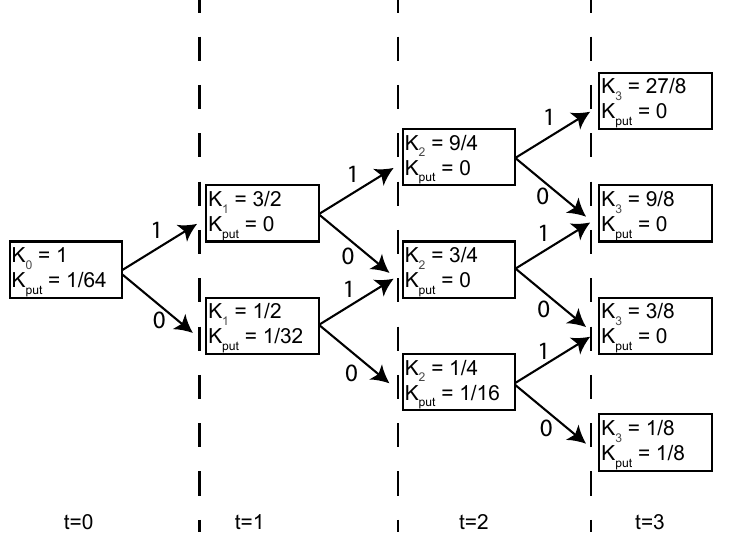}
    \caption{Calculating the price of a call option with expiry of $\tau = 3$, with a strike price of $S = 10/8$ at time $t=0$. The value of the call option at (i) follows from $K_{call} = \max\{0,K_3 - S\}= 27/8 - 10/8$. At (ii) the expected value of the contract is $0.5 (17/8) + 0.5 (0) = 17/16$. A similar process is repeated at (iii) and (iv) to have a price at time $t = 0$ of $17/64$. By only risking $17/64$, the investigator has the opportunity to purchase the risky asset a discounted price if favorable results occur.}
    \label{fig:put-price}
\end{figure}

\subsection{Portfolios with Derivative Contracts}\label{sec:derivative-portfolio}

With the addition of the derivative contract, we augment the investigators portfolio to contain three elements, $\mathbf{K} := (\free_{t}, K^{\mathrm{risky}}_t, K^{\mathrm{deriv}}_t)$, where $K_t^{\mathrm{deriv}}$ denotes the value of the derivative contracts held by the investigator at time $t$. 
\begin{lemma}\label{lemma:options-portfolio}
    Assume that the risk free interest rate is 0 and that the risk-neutral measure is indistinguishable from the null hypothesis measure. Let an investigator's portfolio contain three elements, a risk free asset, a risky asset, and a derivative contract, defined as $\mathbf{K}_t := (\free_{t}, K^{\mathrm{risky}}_t, K^{\mathrm{deriv}}_t)$. Assume that we permit the investigator to short the risky asset, take a loan, or purchase or issue a derivative contract such that probability of negative wealth is $0$. Then the total value of $\mathbf{K}_t$ is a test martingale under the null hypothesis and $\lvert \mathbf{K}_t \rvert \geq \frac{1}{\alpha}$ is an anytime-valid level $\alpha$ test for some $\alpha \in (0,1)$.
\end{lemma}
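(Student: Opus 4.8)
The plan is to show that the total portfolio value $\free_t + \risky_t + \deriv_t$ is a nonnegative martingale under the null with initial value $1$, and then to invoke Ville's inequality exactly as in the proof of Lemma~\ref{lemma:one_fund}. The only genuinely new ingredient relative to Lemma~\ref{lemma:one_fund} is the derivative term $\deriv_t$, so I would structure the argument to isolate that contribution and reuse the two-asset result for the remaining terms. Throughout, let $\mathcal{F}_t$ denote the natural filtration generated by the outcome sequence, and write $K^{\mathrm{total}}_t := \free_t + \risky_t + \deriv_t$, so that $\lvert \mathbf{K}_t \rvert = K^{\mathrm{total}}_t$.

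First I would establish that each of the three value processes is a martingale under the null measure with respect to $\mathcal{F}_t$. With risk-free rate $r = 0$, the risk-free position contributes no drift. The risky position is, as in Lemma~\ref{lemma:one_fund}, a predictable holding in the per-share test martingale $K_t$, so its value increments are martingale differences even when the holding is negative (short selling). For the derivative, I would invoke Proposition~\ref{prop:risk-neutral-conservative} together with Proposition~\ref{prop:measure}: because the risk-neutral measure is indistinguishable from the null, pricing the derivative at its risk-neutral price gives $\deriv_t = \mathbb{E}_P[\deriv_\tau \mid \mathcal{F}_t]$, where $\deriv_\tau = h(K_\tau)$ is the contractual payoff at expiry $\tau$. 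Any process of the form $\mathbb{E}_P[\,\cdot \mid \mathcal{F}_t]$ applied to a fixed terminal random variable is a martingale by the tower property, so the per-contract derivative price is a martingale under the null.

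Next I would assemble the portfolio. Under a self-financing convention with $r = 0$, the increment of the total value is
$$
K^{\mathrm{total}}_t - K^{\mathrm{total}}_{t-1} = \theta^{\mathrm{risky}}_{t-1}\,\Delta P^{\mathrm{risky}}_t + \theta^{\mathrm{deriv}}_{t-1}\,\Delta P^{\mathrm{deriv}}_t,
$$
a martingale transform of the risky and derivative price martingales by the predictable holdings $\theta^{\mathrm{risky}}_{t-1}, \theta^{\mathrm{deriv}}_{t-1} \in \mathcal{F}_{t-1}$. Since a predictable process integrated against a martingale difference is again a martingale difference, $K^{\mathrm{total}}_t$ is a martingale under the null. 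The normalization $K^{\mathrm{total}}_0 = 1$ follows from the initial-wealth convention of Definition~\ref{def:wealthprocess}, and nonnegativity is supplied directly by the hypothesis that the positions are constrained so that the probability of negative total wealth is $0$; hence $\lvert K^{\mathrm{total}}_t \rvert = K^{\mathrm{total}}_t$ almost surely. A nonnegative martingale started at $1$ is a test martingale, so Ville's maximal inequality yields $\mathbb{P}_P(\exists t : K^{\mathrm{total}}_t \geq 1/\alpha) \leq \alpha$, which is exactly the claimed anytime-valid level-$\alpha$ test.

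The hardest part will be rigor in the derivative step: one must verify that $\deriv_t$ is re-priced at its risk-neutral value at \emph{every} intermediate time, not merely quoted at inception, so that $\deriv_t = \mathbb{E}_P[\deriv_\tau \mid \mathcal{F}_t]$ holds as an identity of processes rather than only at $t = 0$. This in turn requires that the indistinguishability of the risk-neutral and null measures hold step-by-step along the binomial lattice of Section~\ref{sec:bernoulli-derivative}, which is precisely what Proposition~\ref{prop:measure} secures. A secondary technical point is the behavior at and after expiry $\tau$: once the contract is exercised or expires, its value must be absorbed into the risk-free and risky positions without injecting outside cash, so that the self-financing property—and hence the martingale property—is preserved for all $t$. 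Once these two points are pinned down, the remainder is the same linearity-of-martingales plus Ville argument used in Lemma~\ref{lemma:one_fund}.
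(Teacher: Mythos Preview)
Your proposal is correct and follows essentially the same approach as the paper: verify initial value $1$, nonnegativity by hypothesis, and the martingale property by decomposing $\lvert\mathbf{K}_t\rvert$ into its three components, with the only new step being that the derivative price is a $P$-martingale because the risk-neutral measure coincides with the null measure. The paper's proof is terser---it simply writes the one-step conditional expectation and asserts equality $(i)$ for the derivative term---whereas you frame the same content more carefully via the Doob-martingale/tower-property argument for $\deriv_t=\mathbb{E}_P[\deriv_\tau\mid\mathcal{F}_t]$ and a self-financing martingale-transform for the portfolio; these extra details (re-pricing at every $t$, absorption at expiry) are not spelled out in the paper but are exactly the right points to pin down.
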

\begin{proof}
    Proof is provided in Appendix~\ref{appdx:deriv-proof}.
\end{proof}


The result of Lemma~\ref{lemma:options-portfolio} allows the investigator to augment the previously described portfolio with derivative contracts. This gives a risk controlling device that is easy to design and interpret, and can perfectly hedge against ruin. For example, a put option may be purchased to prevent the ruin described in our motivating example of observing $(0,0,0)$. This risk control can be accomplished without any estimation of the rate of return or volatility of an asset and does not invalidate statistical guarantees of the testing process. Although we only explicitly consider European put and call options in this work, \emph{any} derivative contract based on $\risky_t$ which is traded at the risk-neutral price can be used in our portfolios.  

Before proceeding, we pause to note that a portfolio containing an option can be replicated by a portfolio containing only a risk free and a risky asset. Such a portfolio is called a replicating portfolio. It follows that such a portfolio could be represented by a version of the Sequential Rebalanced Portfolio \citet[Eq.~48]{waudbysmith2022estimating}. However, it is important to point out how simple it is to construct and trade the option, or other derivatives in general, in comparison to a betting scheme which satisfies these risk controlling constraints. Indeed, it is much simpler to construct a replicating portfolio once the price of the option is determined.

\subsection{Pricing Derivatives with Normal Outcomes}\label{sec:derivative-normal}

Suppose we observe $Z_1, \dots, Z_T$ where $Z_t \stackrel{iid}{ \sim} N(\mu,1) $. We wish to test the null hypothesis $H_0: \mu = 0$ versus the alternative hypothesis $H_1: \mu = \mu_1$ for some $\mu_1 > 0$. Consider the transformation $Y_t = e^{Z_t}$. It follows that $Y_t$ follows a log-normal distribution with parameters $\mu$ and $1$, and $\mathbb{E}_{P}(Y_t) =  e^{1/2}$, where $P$ is the measure under the null hypothesis. We can construct a test wealth process which is a test martingale under the null hypothesis using the Universal Representation Theorem \citep{waudbysmith2022estimating}. This wealth process can be expressed as
\begin{equation}\label{eq:lognormal-capital}
    K_T = \prod_{t=1}^T(1 + \lambda_t(Y_t - e^{\frac{1}{2}})).
\end{equation}

Appendix~\ref{appdx:normal_martingale} shows that this process is a test martingale under the null hypothesis.
If $\lambda_t = e^{-\frac{1}{2}}$, then we have that $(1 + \lambda_t(Y_t - e^{\frac{1}{2}}))$ is a log-normal random variable with parameters $\mu + \log \lambda_t$ and $1$. The product of independent log-normal random variables is log-normal, and hence $K_T \sim \mathrm{log-normal}\left( \sum_{t=1}^{T} (\mu + \log \lambda_t), T \right)$. The measure under the null hypothesis satisfies the necessary conditional expectation condition $\mathbb{E}(K_T | K_1,\ldots, K_{T-1}) = (\frac{1}{1+r})K_{T-1} $ where $r = 0$ by assumption. Since the only free parameter is $\mu$ and we have placed a constraint on the expected value of $K_T$, the risk-neutral measure is indistinguishable from the measure under the null hypothesis. Hence, risk-neutral pricing may be used to price an option so that an investigator's wealth process remains test martingale. This is simply applying the same logic as Proposition~\ref{prop:measure} to an experiment with normal outcomes. 

Note that the test wealth process is the product of independent log-normal increments whose variance is proportional to the distance between observations. If we look at this process from a large time scale, meaning that $T$ grows large, then the distance between increments will appear small. At this scale, the test wealth process would closely resemble Geometric Brownian Motion. With this in mind, we can approximate the price of European options with the Black-Scholes equations \citep{black1973scholes}. The price of a European call option purchased at time $t$ with strike price $S$ and expiry $\tau$ can be approximated as

\begin{equation}
    K_t^{\textrm{call}} = K_t \Phi\left(\frac{K_t}{S} + \frac{\sigma^2 (\tau - t)}{2} \right) - S \Phi\left( \frac{K_t}{S} + \frac{\sigma^2 (\tau - t)}{2} - \sigma \sqrt{\tau -t}\right).
\end{equation}

In order to obtain a known distribution for $K_T$, we have subtly assumed that the betting strategy follows an all or nothing protocol, which at each time step, assigns a non-zero probability of total ruin. If we allow $\lambda_t \in [0,e^{-\frac{1}{2}})$, then we have that $\lambda_tY_t - (1 - \lambda_t e^{-\frac{1}{2}})$ is a log-normal random variable. Recovering the exact distribution of $K_T$ is not feasible at scale, however, it can be approximated through Monte Carlo simulation. 

\subsection{Pricing Derivatives via Monte Carlo}\label{sec:monte-carlo}

We now turn our attention to approximating the price of derivatives through Monte Carlo simulation. We assume that $Y \stackrel{iid}{\sim} F_y$, where $F_y$ denotes the cumulative distribution function of $Y$ under the null hypothesis. As we have done in previous sections, we assume that the risk-neutral measure is indistinguishable from the measure under the null hypothesis. We can define $\deriv_t = g(Y_t,\dots,Y_\tau)$, meaning that the price of the contract is given as a function of future random observations. If we simulate $n$ sample paths of $(Y_t,\dots,Y_\tau)$, then we can use the sample average $\widehat{\deriv_t} = \frac{1}{n} \sum_{i = 1}^n g(Y^i_t,\dots, Y^i_\tau)$ to approximate $\mathbb{E}(g(Y_t,\dots, Y_\tau)) =\deriv_t$.

This estimate is unbiased, so we have that $\mathbb{E}(\widehat{\deriv_t}) = \deriv_t$. This implies that using the estimate in place of the true risk-neutral price allows the investigator to trade this contract without invalidating time-uniform error guarantees. By the weak law of large numbers, we also have that $\widehat{\deriv_t} \xrightarrow[]{p} \deriv_t$ as the number of Monte Carlo simulations goes to $\infty$. If we desire an estimate with lower variance, increasing the number of Monte Carlo samples will achieve this goal.

\section{Empirical Results}\label{sec:empirical-results}

We now present empirical results that demonstrate the practical benefits of using derivatives to hedge against ruin in testing-by-betting. We provide two simulation studies. In Section~\ref{sec:bernoulli-experiment} we continue with a simple null versus a simple alternative in a Bernoulli experiment, and allow the investigator to purchase a European put option to prevent ruin. In Section~\ref{sec:drift-experiment}, we modify the Bernoulli experiment to have a distribution shift midstream. In Section~\ref{sec:real-data} we consider anytime-valid testing of differential gene expression data \citep{dettling2004bagboosting}. 

\subsection{Bernoulli Experiment}\label{sec:bernoulli-experiment}
We now simulate the motivating example of testing the hypothesis of $H_0: p = 0.5$ versus $H_1: p = 0.75$ for independent and identically distributed $Y_1,\dots, Y_{20}  \sim \textrm{Bernoulli}(p)$. We define $\Tilde{K_t}$ as the set of wealth processes which have not exceeded the $1/\alpha$ threshold by time $T = 20$, which implies that the corresponding testing process has failed to reject the null by that time. Explicitly, $\Tilde{K_t} := \{K_t~\textrm{s.t.}~(\max K_{0:20}) < \frac{1}{\alpha}\}$. We define the $0.01$-quantile of $\Tilde{K_t}$ as $k_{0.01}$. We denote the expected value of $k_{0.01}$ as $\mathbb{E}(k_{0.01})$. This value is related to the conditional value at risk, which would be the expected loss, $\mathbb{E}(1 - k_{0.01})$. We estimate $k_{0.01}$ with the sample quantile of the simulated wealth processes in $\Tilde{{K_t}}$. We denote the set of wealth processes in $\Tilde{K_t}$  with a final wealth below $k_{0.01}$ as $\Tilde{K_t}_{,0.01}$. The expected value is approximated with the sample mean from the Monte Carlo simulations.

In this simulation study, we consider ruin to be $K_{20} < 0.25$ and that $K_t$ has not exceeded $1/\alpha$ at any prior time. We can pose this as a risk control problem where we wish to have $\mathbb{E}(k_{0.01}) \geq 0.25$. To control risk, we consider purchasing a European put option with expiration $\tau = 20$. We would like to purchase an option at strike price $S$ such that in the event of exercise, we will have $K_{20} = 0.25$. The strike price must be $0.25 = (1-C)*S,$ where $C$ denotes the purchase price of the option contract. The solutions to the previous equation are $S = 0.30866$ or $S = 0.97285$, and we choose to select the lower priced option, with a lower strike price to engage more of our wealth in the Kelly betting system on the risky asset. We compare these strategies to \emph{Kelly} betting without purchase of the option, a \emph{conservative} fixed $\lambda = 0.133934$ which the worst case scenario has $K_{20} = 0.25$. Finally, we consider a \emph{dynamic} strategy with $\lambda_t~\textrm{s.t.}~0.25= K_t (1 + \lambda_t(-0.5))^{20-t},$ which gives 
$$
\lambda_t = 2\left(1 - e^{(20-t)^{-1} \ln\left(\frac{.25}{K_t}\right)}\right).
$$
This dynamic strategy calculates the value of $\lambda_t$ that would have a worst case $K_{20} = 0.25$ given the current wealth and time.

\begin{table}[]
    \centering
    \begin{tabular}{lrrrr}
    \toprule 
    &       Kelly &  Conservative &    Dynamic &    Option  \\
    \midrule
    Avg. ${K}_{20}$        &   89.44 &      1.94 &  12.43 &  72.44 \\
    Power            &    0.53 &      0.00 &   0.22 &   0.51  \\
    Avg. $\max{K_t}$          &  108.04 &      2.00 &  15.98 &  87.50  \\
    Avg. $K_{20} \in \tilde{K}_{20}$    &    5.21 &      1.94 &   4.29 &   4.41  \\
   $\mathbb{E}(k_{0.01})  $            &    0.043 &      0.904 &   0.250 &   0.250  \\
    \bottomrule \\
    \end{tabular}
    \caption{Results of $10,000$ iterations for four investing strategies. The Kelly betting strategy results in the largest wealth and power, but risks ruin. All other strategies prevent ruin, but only the options based strategy is competitive to the Kelly betting system in terms of final wealth and power.}
    \label{tab:bernoulli_results}
\end{table}

Table~\ref{tab:bernoulli_results} shows results from $10,000$ repetitions. The Kelly betting strategy shows the greatest power, expected final wealth, and the greatest expected wealth among tests which have not yet rejected. These benefits come at the risk of ruin, as seen in $\mathbb{E}(k_{0.01})$. The expected final wealth in the $1\%$ tail is $0.043$, which is far below the desired threshold of ruin at $0.25$. All other investment schemes control this risk. The conservative and dynamic schemes achieve risk control, with certainty, by sacrificing power. The strategy that purchases an option and invests the remaining wealth into the Kelly betting strategy satisfies our risk control criterion, while still remaining comparable to the optimal Kelly betting strategy. The option strategy enjoys similar power to the Kelly betting strategy, as most capital is invested in the Kelly betting strategy while the risk of ruin is provably eliminated.

Figure~\ref{fig:bernoulli-hist} shows the empirical distributions of the the wealth processes at time $T=20$. The strategy based on combining Kelly betting with a put option has a similar distribution to the Kelly betting strategy but has no risk of experiencing ruin. This comes at the cost of slightly lower wealth on average. 
But, recall the goal of testing-by-betting  is not to accrue as much wealth as possible, but to correctly reject the null hypothesis - any wealth beyond the rejection point confers no additional utility.

\begin{figure}
    \centering
    \includegraphics[width = .9\textwidth]{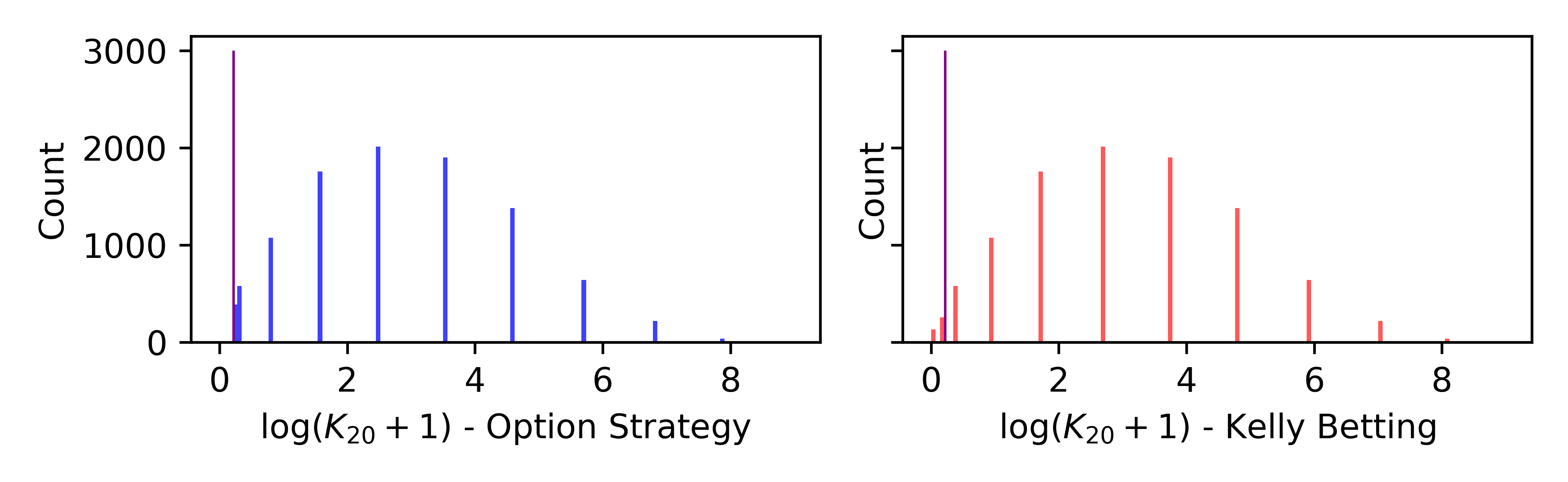}
    \caption{Empirical Distribution of $K_{20}$ following various investing strategies. The Kelly betting system has a non-zero mass to the left of $0.25$, where as all other methods do not. The strategy with a European put option follows a similar distribution to the Kelly betting system.}
    \label{fig:bernoulli-hist}
\end{figure}

\subsection{Bernoulli Experiment Sequence with Distributional Shift}~\label{sec:drift-experiment}

We now modify the previous simulation to consider the utility of options when there is a distribution shift during testing. 
Such an phenomena may occur when there is a delay in the effect of treatment. 
We simulate $T=20$ data points, where $Y_1,\dots,Y_{10} \sim \textrm{Bernoulli}(p = 0.5)$ and $Y_{11},\dots,Y_{20} \sim \textrm{Bernoulli}(p = 0.75)$. Our hypotheses are modified to be
$$H_0 : p = 0.5~\forall t~~\textrm{vs}~~H_1 : \exists t~\textrm{s.t.}~p = 0.75~\forall t^{*} \geq t .$$

Although our alternative hypothesis is now composite, the null hypothesis is still simple. The null hypothesis is identical to that of the previous simulation, and hence the previously considered test wealth process remains a test martingale. This simulation set up is inspired by that of \citet[Section~3.4]{chugg2023fairness}. We consider ruin to remain at $K_{20} = 0.25$, and we elect to purchase the option with a lower contract and strike price. However, we now allow for expiry to occur at either $\tau =10$ or at $\tau =20$. Note that $\tau=10$ corresponds to pricing an option for the \emph{exact} time at which the shift occurs, while $\tau=20$ is an option that will provably hedge against the risk of ruin at time $T=20$. The risk of ruin is compounded by sub-optimal betting during the first $10$ samples. 

Table~\ref{tab:bernoulli_drift} gives results for $10,000$ repetitions. By design, the conservative, dynamic, and $\tau=20$ strategies guarantee that the final wealth does not fall below the prespecified level of ruin. With a well-timed put option at time $\tau = 10$, power can be increased relative to the misspecified Kelly betting strategy. However, the expected value of the final wealth in the left tail is improved by a factor of $4$. When the put option is purchased with an expiry at $\tau = 20$, a single additional sample can recover the wealth attained by Kelly betting in the average case. However, for the cases in the left tail, a minimum of 12 samples will be required in order to recover the wealth at $T=20$ of the $\tau=20$ option strategy. In this scenario, an investigator will have to spend far more resources in order to eventually reject the null hypothesis and the investigator may simply abandon the experiment.

\begin{table}[]
    \centering
    
    \begin{tabular}{lrrrrr}
    \toprule
    {} & {} & {}  & {} & \multicolumn{2}{c}{Options} \\
    {} &      Kelly &  Conservative &   Dynamic &    $\tau=10$ &   $\tau=20$ \\
    \midrule
    Avg. ${K}_{20}$         &   9.63 &      1.39 &  5.69 &   9.38 &  7.85 \\
    Power             &   0.09 &      0.00 &  0.07 &   0.11 &  0.09 \\
   Avg. ${\max{K_t}}$          &  12.03 &      1.46 &  7.18 &  11.56 &  9.74 \\
    Avg. $K_{20} \in \Tilde{K}_{20}$  &   2.43 &      1.39 &  3.60 &   2.61 &  2.05 \\
    $\mathbb{E}(k_{0.01})  $   &   0.002 &      0.631 &  0.250 &   0.010 &  0.250 \\
    \bottomrule\\
    \end{tabular}
    \caption{Results of $10,000$ iterations with distribution shift. The Kelly betting strategy results in the largest wealth and power, but risks ruin. All other strategies prevent ruin, but only the options based strategy are competitive to the Kelly betting system in terms of final wealth and power.}
    \label{tab:bernoulli_drift}
\end{table}





\subsection{Gene Expression Data}\label{sec:real-data}

Microarray data for $m=6033$ gene expression levels across 50 normal samples versus 52 prostate cancer tumor samples was sequentially tested~\citep{dettling2004bagboosting}. The data set was normalized and log-transformed. Under the null hypothesis, each gene follows a standard normal distribution. The data was then transformed using the standard normal CDF so that data is bounded between $0$ and $1$. Following this transformation, the data follows a Uniform$(0,1)$ distribution under the null hypothesis, with mean $\mathbb{E}(Y_i) = 0.5$. The goal of each individual test is to determine if gene expression is significantly different for tumor samples in comparison to the normal samples.

Let $Y_i$ denote the transformed expression level for the $i^{\textrm{th}}$ gene. The null hypothesis is $H_0: \mathbb{E}(Y_i) = 0.5$ and the alternative is $H_1: \mathbb{E}(Y_i) \neq 0.5$. We consider ruin as a wealth of $0.5$, and price a put option with expiry at time $\tau = 50$, at the middle of testing process. We price this put option using Monte Carlo simulations. We use (and subsequently discard) two samples from the tumor samples subjects to estimate a prior probability of the alternative hypothesis \citep{cook_dubey_lee_zhu_zhao_flaherty_2024}. This prior is then used to determine the betting fraction, $\lambda$, to be used in the sequential test. The underlying capital process follows that of the Hedged-CS \citet{waudbysmith2022estimating}. Explicitly
the process is 
$$K_t = 0.5 \prod_{t=1}^T (1+\lambda(Y_t - 0.5)) + 0.5 \prod_{t=1}^T (1-\lambda(Y_t - 0.5)).$$

\begin{table}[]
    \centering
    
    \begin{tabular}{rrr}
    \toprule
    {} & {Kelly} & {Option $\tau = 50$} \\
    \midrule
    Avg. ${K}_{20}$    &  1.71e10&      8.61e+9 \\
    Prop. Rejected      &   .340 &      .310 \\
    Avg. $K_{20} \in \Tilde{K}_{20}$  &   3.483&      3.436  \\
    $\mathbb{E}(k_{0.01})$   &   .003 &    0.5 \\
    \bottomrule\\
    \end{tabular}
    \caption{Simulated sequential testing of Microarray data. Purchasing a put option provably controls risk with small degradation to the probability of rejection.}
    \label{tab:real_data}
\end{table}

Table~\ref{tab:real_data} shows the results for performing sequential tests for all $6033$ genes. The Kelly betting system achieves a higher average final wealth and has a larger proportion of hypotheses rejected. However, the expected wealth in the $1\% $ is just $0.003$, whcih corresponds to a return of $-99.7\% $ return on investment. As intended, purchasing a put option prevents ruin. This comes at the price of being able to reject hypotheses, as was seen in the synthetic data experiments.

\section{Conclusion}

We have characterized the stochastic process of a gambler's wealth in the testing by betting framework as a stochastic process that represents the price of an asset. We have constructed an investigator's test wealth process as the value of a portfolio between a risk free and risky asset. By allowing the investigator to trade these assets at the risk-neutral price, the investigator's wealth process can be used to conduct an anytime-valid sequential test. An investigator can perfectly hedge against the risk of ruin by purchasing forward-looking derivative contracts, which are the first of their kind in the testing by betting literature. Empirical results show that the risk of ruin can be completely eliminated while maintaining competitive power to the long-term optimal Kelly betting system.

\bibliography{references}

\appendix

\section{Proof of Theorem~\ref{thm:null-slln}}\label{appdx:slln-mds-proof}

 \begin{proof}
    By assumption $\mathbb{E}({C_t}^2)$ is uniformly bounded by some constant $b < \infty$. For a martingale difference sequence with $\sum_{t=1}^{\infty}\mathbb{E}_P(C_t^2)/t^2 \leq \sum_{t=1}^{\infty}b/t^2 \leq \infty$, we have that by the strong law of large numbers for martingales, $T^{-1} \sum_{t=1}^{T} c_t \to 0 $ almost surely. 
\end{proof}

\section{Proof of Proposition~\ref{prop:risk-neutral-conservative}}\label{appdx:risk-neutral-pricing-valid}
\begin{proof}
    We have that under the null hypothesis, the expected value of a cash flow derived from a test wealth process, whose elements could be random variables, is expressed as,
  \begin{equation*}  
    g_{t}^{\textrm{PV}} = \mathbb{E}_P\left(\left(\sum_{t'=t}^T d_{t,t'}c_{t'}\right) + K_T\right)=  \mathbb{E}_P\left(\sum_{t'=t}^T c_{t'}\right) +\mathbb{E}_P \left(K_T\right) = \sum_{t'=t}^T \mathbb{E}_P \left(c_{t'}\right) + K_0 = 1
   \end{equation*} 
Under risk-neutral pricing, the test wealth process evolves such that $\mathbb{E}_{\textrm{risk-neutral}}(K_{t}) = \frac{1}{1+r} K_{t-1} = K_{t-1}$. Hence, the increments also have expectation $0$, and by we have that $\mathbb{E}_{\textrm{risk-neutral}}(K_T) = K_0 = 1$. 
\end{proof}

\section{Proof of Lemma~\ref{lemma:one_fund}}\label{appdx:one-fund-proof}

\begin{proof}

Let us first consider the case that negative values for $\free_t$ or $\risky_t$ are not permitted. To establish that $\lvert \mathbf{K}_t \rvert$ is a test martingale under the null hypothesis, we must show that the the initial value $\lvert \mathbf{K}_0 \rvert = 1$, that $\lvert \mathbf{K}_t \rvert$ is nonnegative for all $t$, and that $\mathbb{E}_{P} (\lvert \mathbf{K}_t \rvert \mid K_{1},\dots,K_{t-1}) = K_{t-1}$, where $P$ is the measure under the null hypothesis.
By definition the initial portfolio value is 
$$\lvert \mathbf{K}_0 \rvert = \free_0 + \risky_0 = 1 + 0 = 1.$$

Next we verify nonnegativity. Since
$\lvert \mathbf{K}_t \rvert = \free_t + \risky_t,$ we can show that $K_t$ is nonnegative by showing $\free_t$ and $\risky_t$ are nonnegative. Note that since short selling is not allowed, this means that any time point, the amount that $\free_t$ can decrease (be invested in $\risky_t$) is at most $\free_t$. This implies that $\free_t$ is nonnegative. Note that $\risky_t =  \frac{n_{\textrm{shares}, t}}{N} M_t$ which is a product of nonnegative terms, and hence $\risky_t $ is nonnegative. 

Lastly, we establish $\mathbb{E}_{P} (\lvert \mathbf{K}_t \rvert \mid K_1 , \dots, K_{t-1}) = K_{t-1}.$
\begin{align*}
    \mathbb{E}_{P} (\lvert \mathbf{K}_t \rvert \mid K_1 , \dots, K_{t-1}) &= \mathbb{E}_{P}(\free_t + \risky_t \mid K_1 , \dots, K_{t-1})\\
    &= \free_{t-1} + \risky_{t-1}\mathbb{E}_{P}(1 + fX_l(\omega_l) | K_1 , \dots, K_{t-1}) \\
    &=  \free_{t-1} + \risky_{t-1}\mathbb{E}_{P}\left(\frac{q(\omega_l)}{p(\omega_l)} \Bigm| K_1 , \dots, K_{t-1}\right)\\ &= \free_{t-1} + \risky_{t-1} = \lvert \mathbf{K}_{t-1} \rvert.
\end{align*}

Hence $\lvert \mathbf{K}_t \rvert$ is a test martingale. Note that we need only consider the expected change in the portfolio value following an observation, as any rebalancing of the portfolio between observations does not affect the value of the portfolio itself.

Now we consider the case that short selling is permitted and that loans are permitted with an interest rate of $0$. By definition the initial value remains unchanged. Let the worst case multiplicative downside of $\risky_t$ be denoted as $d$. Assume that a single-period loan of $a$ dollars is issued for purchasing the risky asset. Then at time $t$, the portfolio is rebalanced from $[\free_{t}, \risky_t]$ to $[\free_{t} - a, \risky_t + a]$. It then follows that the portfolio at time $t+1$ is $[\free_{t} - a, d(\risky_t + a)]$. If $ 0 \leq a \leq \frac{\free_t + d \risky_t}{1-d}$, then we have that $\lvert \mathbf{K}_{t+1} \rvert$ is given by,
\begin{align*}\lvert \mathbf{K}_{t+1} \rvert &= \free_{t} - a +  d(\risky_t + a)\\
&\geq \free_{t} + d(\risky_t) + \frac{\free_t + d \risky_t}{1-d}(d-1)\\
&= \free_{t} + d(\risky_t) -\free_{t} - d(\risky_t) = 0.
\end{align*}

Using an induction argument, $\lvert \mathbf{K}_t \rvert$ is nonnegative for all $t$.

Lastly, 
\begin{align*}
    \mathbb{E}_{P} (\lvert \mathbf{K}_t \rvert \mid K_1, \dots, K_{t-1}) &= \mathbb{E}_{P}(\free_t - a + (\risky_t+a) \mid K_1, \dots, K_{t-1})\\
    &= (\free_{t-1}-a) + (\risky_{t-1} + a)\mathbb{E}_{P}(1 + fX_l(\omega_l)|K_1, \dots, K_{t-1}) \\
    &=  (\free_{t-1}-a)+ (\risky_{t-1} + a) \mathbb{E}_{P}\left(\frac{q(\omega_l)}{p(\omega_l)} | K_1, \dots, K_{t-1} \right) \\ 
    &= \free_{t-1} - a + \risky_{t-1} + a = \lvert \mathbf{K}_{t-1} \rvert.
\end{align*}

Similarly, let $b$ denote the value of shares of $K_t$ shorted at time $t$. If we limit $b \leq \frac{\free_t + u \risky_t}{u-1}$, then a similar analysis follows as above.

\end{proof}

\section{Proof of Proposition~\ref{prop:measure}}\label{appdx:proof-measure}
\begin{proof}
    Under the null hypothesis, the probability of of observing $P(X_t =1) = p$ and $P(X_t =0) = 1-p$ by definition. 

The multiplicative update is either  $1 + \lambda_t(1- p)$ or $1 + \lambda_t(0 - p)$. Assuming that the risk free asset has a rate of return $r = 0$, the risk-neutral probabilities $(q_t, 1-q_t)$ are the solution to 
$$ M_{t-1} = q_t (M_{t-1} (1 + \lambda_t(1- p))) + (1-q_t)(M_{t-1} (1 + \lambda_t(0 - p))).$$

Simplifying yields,
\begin{align*}
    1 &= q_t (1 + \lambda_t(1-p) ) + (1-q_t)(1 + \lambda_t(0 - p))\\
    &= q_t(1 - \lambda_t p) + (1-\lambda_t p) - q_t( 1- \lambda_t p) + q_t \lambda_t.
\end{align*}

Rearranging terms givens the result that $q_t = p$. Hence, for Bernoulli experiments, the risk-neutral probabilities are identical to the true probabilities under the null hypothesis.
\end{proof}

\section{Pricing a Call Option}\label{appdx:call-option}

We illustrate an example of the pricing of a speculative option contract in 
Figure~\ref{fig:put-price},
where we price a call option contract with expiry $\tau = 3$, strike price $S = 10/8$ and $K_0 = 1$. 
Since $S > K_0$, this option is more valuable for the speculator than for the hedger.
At expiry the value of the option is non-zero only if the outcomes of the Bernoulli experiment is $1$ at all three time periods.
The price of the option is only $17/64$, so for a price of less than $K_0 = 1$ the speculator can take a stake in the random sequence that pays $17/8$ after exercising the option and liquidating the shares.
The total profit from the transaction is $119/64$ which is a return of $600\%$.

\begin{figure}[htbp]
    \centering
    \includegraphics[width=0.6\textwidth]{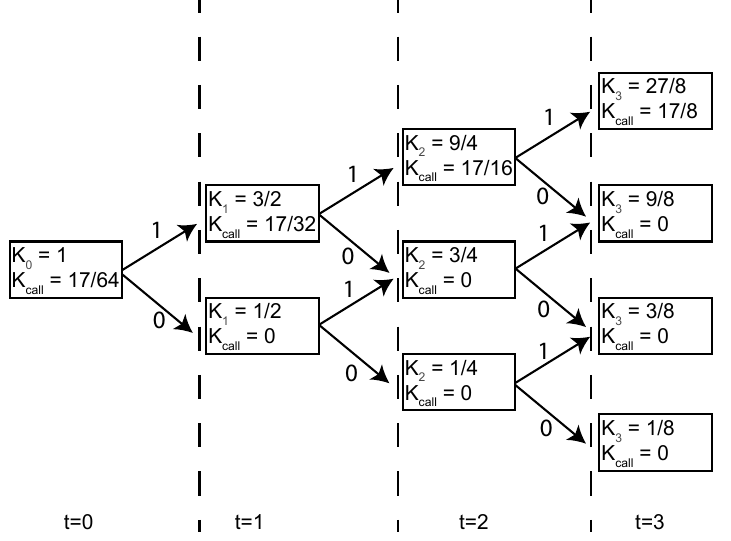}
    \caption{Calculating the price of a call option with expiry of $\tau = 3$, with a strike price of $S = 10/8$ at time $t=0$. The value of the call option at (i) follows from $K_{call} = \max\{0,K_3 - S\}= 27/8 - 10/8$. At (ii) the expected value of the contract is $0.5 (17/8) + 0.5 (0) = 17/16$. A similar process is repeated at (iii) and (iv) to have a price at time $t = 0$ of $17/64$. By only risking $17/64$, the investigator has the opportunity to purchase the risky asset a discounted price if favorable results occur.}
    \label{fig:put-price}
\end{figure}

\section{Proof of Lemma~\ref{lemma:options-portfolio}}\label{appdx:deriv-proof}
\begin{proof}
    By assumption $\lvert \mathbf{K}_0 \rvert = 1$, $\lvert \mathbf{K}_t \rvert$ is non-negative for each $t$ by design. What remains to be shown is that the conditional expectation is the previous value. 

    \begin{align*}
    \mathbb{E}_P(\lvert \mathbf{K}_t \rvert | \mid K_1, \dots, K_{t-1}) &= \mathbb{E}_P (\free_t + \risky_t + \deriv_t | K_1, \dots, K_{t-1}) \\
    &=  \free_{t-1} + \risky_{t-1}\mathbb{E}_P (1 + fX_l(\omega_l)|K_1, \dots, K_{t-1}) + \mathbb{E}_p (\deriv_{t} | K_1, \dots, K_{t-1}) \\
    &\stackrel{(i)}{=} \free_{t-1} + \risky_{t-1} + \deriv_{t-1} = \lvert \mathbf{K}_{t-1}\rvert
    \end{align*}
    The equality $(i)$ follows from the assumption the derivative contract is valued by its risk-neutral price and the risk-neutral measure is indistinguishable from the measure under the null hypothesis. 
\end{proof}

\section{Normal Test Martingale}\label{appdx:normal_martingale}

\begin{proof}
    $K_0 := 1$.
    Assume $\lambda$ is nonnegative. Then if $\lambda_t$ is fixed, then $(1 + \lambda_t(Y_t - e^{\frac{1}{2}}))$ is minimized by minimizing $Y_t$, which is bounded below by $0$. Then requiring $(1 + \lambda_t(0 - e^{\frac{1}{2}})) \geq 0$ corresponds to requiring $\lambda_t \leq e^{-\frac{1}{2}}$. By requiring $\lambda_t \in [0, e^{-\frac{1}{2}}]$, $K_t$ is a product of nonnegative numbers, and is nonnegative.

    \begin{align*}
        \mathbb{E}_{P}(K_T | K_1,\dots ,K_{T-1}) &= \mathbb{E}_{P}\left(\prod_{t=1}^{T}(1 + \lambda_t(Y_t - e^{-\frac{1}{2}})) \Bigm| K_1, \dots, K_{T-1} \right)\\
        &= K_{T-1}\mathbb{E}_{P}\left(1 + \lambda_T(Y_t - e^{-\frac{1}{2}})\Bigm| K_1, \dots, K_{T-1} \right) \\
        &= K_{T-1} (1 + \lambda_T(\mathbb{E}_{P}\left(Y_t \mid K_1, \dots, K_{T-1} \right)- e^{-\frac{1}{2}}))\\
        &= K_{T-1} (1 + \lambda_T( e^{-\frac{1}{2}}- e^{-\frac{1}{2}})) = K_{T-1}
    \end{align*}

\end{proof}

\end{document}